\documentclass[aps,pra,twocolumn,superscriptaddress,10pt,nofootinbib]{revtex4-2}

\usepackage{amsmath}
\usepackage{amssymb}
\usepackage{graphicx}
\usepackage{hyperref}
\usepackage{braket}
\usepackage{colortbl}
\usepackage{pgfplots}
\pgfplotsset{compat=1.16}
\usepgfplotslibrary{groupplots}
\usetikzlibrary{calc}

\texorpdfstring

\newtheorem{proposition}{Proposition}

\newenvironment{proof}%
  {\par\medskip\noindent\textit{Proof.}\ }%
 {\hfill$\square$\par\medskip}

\newtheorem{theorem}[proposition]{Theorem}

\newtheorem{corollary}{Corollary}
\makeatletter
\@ifundefined{proof}{%
  \newenvironment{proof}{\par\noindent\textit{Proof.}\ }%
                        {\hfill$\square$\par\medskip}}{}
\makeatother

\begin{document}

\title{Half a qubit: an algebraic fractionalization}

\author{Po-Yao Chang}
\email{pychang@phys.nthu.edu.tw}
\affiliation{Department of Physics, National Tsing Hua University, Hsinchu 30013, Taiwan}

\begin{abstract}
Fractionalizing a quantum two-level system is usually
associated with encodings based on pairs of Majorana fermions---an operational fractionalization. We show an alternative algebraic fractionalization by embedding Sz\'{e}kely's classical ``half-coin'' into a non-Hermitian Krein space. 
The coefficients of $(q+pz)^{1/2}$ define a signed sequence and a normalized,
non-Hermitian biorthogonal operator describing a biorthogonal half-qubit. 
We prove that two such objects fuse into an arbitrary pure qubit through the signed Vandermonde
convolution that  the collective
$N\ge2$ vectors are null in Krein space. $L_1$ norm of the half-qubit follows in closed form, $\lVert p\rVert_1 = 2\sqrt{q}-\sqrt{q-p}$.
Its $L_1$ norm increases monotonically with the bias and attains its supremum $\sqrt{2}$ precisely at
the unbiased point $p=q=1/2$.
Interestingly, we identify two structural results as follows. 
First, number parity and the $\eta$-metric  generate a distinguished
commuting $\mathbb Z_2\times\mathbb Z_2$ subgroup.
Second, we find the $\eta$-metric obstructs any local $\eta$-self-adjoint
partner of the parity, so a half-qubit carries a $\mathbb{Z}_2$ observable but no local
$SU(2)$. The full Pauli algebra emerges only upon fusion. We then show that the
construction survives truncation of the Fock basis: the fused qubit is exact at every
cutoff, and the Vandermonde cancellation is visible in sign-weighted photon-number statistics, 
and can be tested using existing cavity and trapped-ion state-synthesis methods. Finally, we generalize this algebraic fractionalization to a $1/n$-qubit, which can be achieved by replacing the square root with an $n$th root.
\end{abstract}

\maketitle

\section{Introduction}

The fractionalization of a quantum two-level system is a cornerstone of modern condensed
matter physics and quantum information theory. In conventional frameworks, splitting a
qubit is achieved through  operational scheme by splitting it into two Majorana fermions.
To isolate a single Majorana fermion, we need to consider 
a long-range-entangled many-body state where Majorana fermions
are at the boundaries of one-dimensional topological superconductors or
within the vortex cores of $p+ip$ superfluids~\cite{Kitaev2001, Read2000, Nayak2008,
FuKane2008, Alicea2012, Sarma2015}.
There, the fractional quantum dimension $d=\sqrt{2}$ relies entirely on multi-particle
entanglement and spatial non-locality.

In this work, we consider another fractionalization from the algebraic point of view.
This alternative approach requires the concept of negative probability from the definition of a half-coin introduced by  
Sz\'{e}kely~\cite{Szekely2005}. To incorporate with negative probability in quantum systems, 
we choose an indefinite-metric representation which is non-Hermitian.
Negative probabilities have a long history in
quantum theory, from Dirac's and Wigner's early use of signed
distributions~\cite{Dirac1942, Wigner1932} to Feynman's defence of them as bookkeeping
devices~\cite{Feynman1987}, and to their modern role as a diagnostic of nonclassicality
and a resource for quantum computation~\cite{Ferrie2011, Spekkens2008, Veitch2012,
Mari2012}. We promote the particular signed measure introduced by
Sz\'{e}kely~\cite{Szekely2005, RuzsaSzekely1988} to a non-Hermitian Krein
space~\cite{Bognar1974, Mostafazadeh2010, Bender1998, Bender2007, Brody2013}.
The properties of non-Hermitian quantum systems in a non-Hermitian Krein
space are equivalent to those of the biorthogonal formalism~\cite{Brody2013}, whose
spectral and dynamical consequences have been explored extensively in recent
years~\cite{ElGanainy2018, Ashida2020}. Within this formalism the biorthogonal basis
reorganizes the spectral data of a non-Hermitian system: it yields entanglement spectra
and entropies that diagnose non-Hermitian topology and the negative or complex central charges of the
underlying non-unitary conformal field theories~\cite{Chang2020Entanglement,
Chang2022Renyi, Chang2023Relating, Chang2026Impurity, Shimizu2025, Liu2026}. 
Here the entanglement spectrum can also be negative which naturally gives the negative probability interpretation. 
In this work, we construct an exact, local algebraic fractionalization of a qubit by using the biorthogonal formalism.

We have two main findings. The first concerns the local algebra. A half-qubit inherits two inequivalent $\mathbb{Z}_2$ gradings of its Fock
space---one by number parity, one by the sign of the quasi-probability, or it is referred to as the $\eta$-metric. 
Because both generators are diagonal they commute,
so the local symmetry is a $\mathbb{Z}_2\times\mathbb{Z}_2$ group rather than the single
$\mathbb{Z}_2$ one might expect. 
In the biorthogonal formalism,
$\langle\eta\rangle_{\rm bi}=\sum_k|p_k|$ can exceed one even though
$\eta^2=\mathbb I$. This is a formal biorthogonal expectation value, not a
Born-rule expectation for a standard projective measurement.
By contrast, we prove that no local $\eta$-self-adjoint operator anticommutes with the parity: the obstruction is of finding the partner of parity is due to the indefiniteness of the metric, so the absence of a $SU(2)$.

The second finding concerns a possible experimental realization.
The formal construction uses an infinite
Fock basis with a $k^{-3/2}$ coefficient tail. At every cutoff $K\ge1$, conditioning on
the $N\le1$ code space gives the target qubit exactly; the cutoff changes the success
probability and the approximation to the full half-qubit state. 
The signed convolution is
reconstructed by assigning the known weight $s_{n_A}s_{n_B}$ to number-resolved outcomes.
We give a four-level example and identify circuit QED and trapped-ion primitives that could
implement this conditional test.

The paper is organized as follows. Sec.~\ref{sec:coin} reviews Sz\'{e}kely's half-coin
and the necessity of negative probabilities. Sec.~\ref{sec:halfqubit} embeds it in a
Krein space, defines the biorthogonal half-qubit, and establishes the Vandermonde fusion
of two halves into one qubit. Sec.~\ref{sec:arbitrary} generalizes to arbitrary bias,
identifies the boundary of existence, and evaluates the $L_1$ norm in closed
form. Sec.~\ref{sec:halfpauli} analyses the local observable algebra: the
$\mathbb{Z}_2\times\mathbb{Z}_2$ structure, the anomalous expectation value of the $\eta$-metric,
the no-go theorem for a local $X$, and the emergence of $SU(2)$ upon fusion.
Sec.~\ref{sec:realization} treats the multiphoton realization at finite truncation. 
Sec.~\ref{sec:nthroot} generalizes the construction to $n$th roots.
Sec.~\ref{sec:majorana} discusses a possible relation to Majorana fermions, and Sec.~\ref{sec:conclusion} concludes.

\section{Half of a coin and negative probability}
\label{sec:coin}

Our approach is based on Sz\'{e}kely's ``half-coin''~\cite{Szekely2005}. In
classical probability theory, the generating function of a fair coin flip is 
$g(z)=\tfrac12+\tfrac12 z$. Finding a distribution representing ``half'' of a coin is equivalent to finding a generating function $f(z)$ whose convolution with itself yields the
fair coin, $[f(z)]^2=g(z)$.

The fractional root is $f(z)=\left(\frac{1+z}{2}\right)^{1/2}$. Expanding via the
generalized binomial theorem yields an infinite series,
\begin{equation}
f(z) = \sum_{k=0}^\infty p_k z^k, \qquad
p_k = \frac{1}{\sqrt{2}} \binom{1/2}{k}.
\label{eq:pk_fair}
\end{equation}
The coefficients $p_k$ play the role of the probability of outcome $k$. For $k=0$ and
$k=1$ they are positive. For $k\ge2$ the binomial coefficient has the strictly
alternating sign $(-1)^{k-1}$, mandating negative probabilities, e.g.\
$p_2=-1/(8\sqrt2)$ and $p_3=1/(16\sqrt2)$. Since
$\left|\binom{1/2}{k}\right| = \frac{1}{2\sqrt{\pi}}k^{-3/2}\left[1+O(k^{-1})\right]$,
the series converges \emph{absolutely}, which is what makes the quantum embedding of
Sec.~\ref{sec:halfqubit} well defined. 

Classically $p_k$ is a signed measure. When two independent half-coins are tossed, the
joint probability of the sum $N$ is the discrete convolution
$Q(N)=\sum_{j=0}^N p_j p_{N-j}$. The negative terms in the infinite tail cancel exactly,
yielding $Q(0)=Q(1)=1/2$ and $Q(N)=0$ for all $N\ge2$: the negative probabilities
annihilate every higher outcome and restore a standard binary distribution
as shown in Fig.~\ref{fig:pk}.

\begin{figure}[t]
\centering
\begin{tikzpicture}[
  every axis/.style={
    width=\columnwidth, height=3.9cm,
    xmin=-0.6, xmax=8.6, xtick={0,1,2,3,4,5,6,7,8},
    tick label style={font=\footnotesize}, label style={font=\footnotesize},
    axis lines=left, ybar, bar width=7pt,
    extra y ticks={0}, extra y tick labels={}, extra y tick style={grid=major},
  }]
\begin{axis}[name=axA, ylabel={$p_k$}, xlabel={$k$}, ymin=-0.2, ymax=0.8]
\addplot[fill=blue!55,draw=blue!70!black] coordinates
 {(0,0.70711) (1,0.35355) (2,-0.08839) (3,0.04419) (4,-0.02762)
  (5,0.01933) (6,-0.01450) (7,0.01139) (8,-0.00926)};
\end{axis}
\begin{axis}[
  at={($(axA.north east)+(-3.1cm,0.6cm)$)}, anchor=north west,
  width=2.9cm, height=1.9cm, scale only axis,
  xmin=1.4, xmax=8.6, ymin=-0.115, ymax=0.075,
  xtick={2,4,6,8}, ytick={-0.08,0,0.04},
  yticklabel style={/pgf/number format/fixed, /pgf/number format/precision=2},
  tick label style={font=\tiny}, bar width=2.6pt,
  title={\tiny $k\ge2$, magnified}, title style={yshift=-5pt},
]
\addplot[fill=blue!55,draw=blue!70!black] coordinates
 {(2,-0.08839) (3,0.04419) (4,-0.02762) (5,0.01933)
  (6,-0.01450) (7,0.01139) (8,-0.00926)};
\end{axis}

\begin{axis}[
  at={($(axA.south west)-(0,1.15cm)$)}, anchor=north west,
  ylabel={$Q(N)=\sum_j p_jp_{N-j}$}, xlabel={$N$}, ymin=-0.1, ymax=0.65]
\addplot[fill=red!55,draw=red!70!black] coordinates
 {(0,0.5) (1,0.5) (2,0) (3,0) (4,0) (5,0) (6,0) (7,0) (8,0)};
\end{axis}
\end{tikzpicture}
\caption{Sz\'{e}kely quasi-probabilities of the fair half-coin, Eq.~\eqref{eq:pk_fair} 
(top; inset magnifies the alternating tail $k\ge2$), and their self-convolution (bottom).
The negative tail leads the collective $N\ge2$ vectors to be Krein-null exactly, leaving
$Q(N)=\tfrac12\binom{1}{N}$.}
\label{fig:pk}
\end{figure}

\section{The biorthogonal half-qubit}
\label{sec:halfqubit}

\subsection{Krein embedding}

To fractionalize a quantum state we elevate the classical signed probabilities to
probability amplitudes in a Fock space $\{\ket{k}\}_{k=0}^\infty$. Because the $p_k$ may
be negative, the state cannot be accommodated in a Hilbert space with a positive-definite
inner product; it resides instead in a Krein space~\cite{Bognar1974, Mostafazadeh2010},
an indefinite-metric space in which vectors may have negative norm. We introduce the
$\eta$-metric operator
\begin{equation}
\eta = \sum_{k=0}^\infty s_k \ket{k}\bra{k}, \qquad s_k \equiv \operatorname{sgn}(p_k),
\label{eq:eta}
\end{equation}
so that $\eta^\dagger=\eta$ and $\eta^2=\mathbb{I}$.
Here the $\eta$-metric is also used in pseudo-Hermitian quantum system~\cite{Mostafazadeh2010}.
We also emphasize that this $\eta$-metric operator is for any form of half-qubit as we will discuss in Sec.~\ref{sec:arbitrary}.
 Explicitly the $\operatorname{sgn}(p_k)$ are 
 \begin{equation}
s_0=+1, \quad s_k=+1 \ (k \text{ odd}), \quad s_k=-1 \ (k \text{ even},\, k\ge2),
\label{eq:signs}
\end{equation}
so the metric is a fixed, state-independent operator with infinitely many positive and
infinitely many negative directions.

With this Krein structure in place we may use the biorthogonal formulation of quantum
mechanics~\cite{Brody2013, Mostafazadeh2010} to define the single-mode right and left
half-qubit states as
\begin{align}
\ket{R} &= \sum_{k=0}^\infty \sqrt{|p_k|}\, e^{ik\phi} \ket{k}, \label{eq:R}\\
\bra{L} &= \sum_{k=0}^\infty s_k \sqrt{|p_k|}\, e^{-ik\phi} \bra{k}, \label{eq:L}
\end{align}
where $\phi$ is the phase carrying the quantum coherence. The two states are
not independent but are related by the $\eta$-metric,
\begin{equation}
\ket{L} =\eta \ket{R},
\qquad\text{equivalently}\qquad
\bra{L} = \ket{R}^\dagger\eta ,
\label{eq:LisRdaggereta}
\end{equation}
so that Eqs.~\eqref{eq:R}--\eqref{eq:L} constitute a standard $\eta$-pseudo-Hermitian
pair rather than two independent objects. The biorthogonal density operator is
$\rho_{\text{bi}}=\ket{R}\bra{L}$, and the trace is automatically
normalized,
\begin{equation}
\operatorname{Tr}\rho_{\text{bi}} = \braket{L|R} = \sum_{k=0}^\infty p_k = f(1) = 1 .
\end{equation}
Note that $\ket{R}$ is \emph{not} normalized with respect to the ordinary Hilbert-space
inner product: $\braket{R|R}=\sum_k|p_k| \equiv \lVert p\rVert_1 \geqslant 1$, a fact we exploit in Sec.~\ref{sec:arbitrary} and must account for in Sec.~\ref{sec:realization}.
Here $\lVert p\rVert_1$ is the $L_1$ norm which is the sum  of the absolute values $\sum_k|p_k|$ over all Fock levels.

\subsection{Fusion of two half-qubits}
\label{sec:composite_qubit}

We now form the bipartite tensor product of two half-qubits $A$ and $B$, prepared with a
phase $\phi$. The global
density matrix is
\begin{equation}
\rho_{AB} = \rho_{\text{bi}}^{(A)} \otimes \rho_{\text{bi}}^{(B)}
          =  \ket{R_{AB}}\bra{L_{AB}},
\end{equation}
with
\begin{align}
\ket{R_{AB}} &= \sum_{N=0}^\infty \sum_{k=0}^N
   \sqrt{|p_k||p_{N-k}|}\, e^{iN\phi} \ket{k,N-k}, \\
\bra{L_{AB}} &= \sum_{M=0}^\infty \sum_{l=0}^M
   s_l s_{M-l} \sqrt{|p_l||p_{M-l}|}\, e^{-iM\phi} \bra{l,M-l}.
\end{align}
Here $N$ denotes the total number of the composite. The physically accessible
observable when the two halves are brought together is not the microscopic pair
$(k,N-k)$ but the collective manifold spanned by all partitions of $N$.

Decomposing the global state into these macroscopic sectors,
\begin{equation}
\rho_{AB} = \sum_{N,M} e^{i(N-M)\phi}\, \ket{\bar N}_R\, {}_L\!\bra{\bar M},
\label{eq:rhoABsectors}
\end{equation}
where
\begin{align}
\ket{\bar N}_R &= \sum_{k=0}^N \sqrt{|p_k||p_{N-k}|}\,\ket{k,N-k}, \\
{}_L\!\bra{\bar M} &= \sum_{k=0}^M s_k s_{M-k}\sqrt{|p_k||p_{M-k}|}\,\bra{k,M-k},
\end{align}
are the unnormalized collective right and left states. Their biorthogonal overlap is
exactly diagonal and returns the population weight of each manifold,
\begin{equation}
{}_L\!\braket{\bar N | \bar M}_R = \delta_{NM}\, Q(N)
= \delta_{NM} \sum_{j=0}^N p_j p_{N-j}.
\label{eq:overlap}
\end{equation}
Applying Vandermonde's convolution identity to Eq.~\eqref{eq:overlap} gives
$Q(N)=\tfrac12\binom{1}{N}$, so that $Q(0)=Q(1)=1/2$ while, crucially, $Q(N)=0$ for all
$N\ge2$.

The vanishing of the higher manifolds forces the introduction of a normalized logical
basis. The operator $\rho_{AB}$ lives in an infinite-dimensional Krein space whose
macroscopic sectors are unnormalized, with
${}_L\!\braket{\bar 0|\bar 0}_R={}_L\!\braket{\bar 1|\bar 1}_R=1/2$ and
${}_L\!\braket{\bar N|\bar N}_R=0$ for $N\ge2$. Any physically observable subsystem must
however obey the axioms of standard quantum mechanics. We therefore define, \emph{for the
non-null manifolds only},
\begin{align}
\label{eq:log}
&\ket{N_{\text{log}}}_R = \frac{\ket{\bar N}_R}{\sqrt{Q(N)}}, \qquad
{}_L\!\bra{N_{\text{log}}} = \frac{{}_L\!\bra{\bar N}}{\sqrt{Q(N)}}, \\ \notag
&N \in \{0,1\},
\end{align}
which by Eq.~\eqref{eq:overlap} satisfy
${}_L\!\braket{N_{\text{log}}|M_{\text{log}}}_R=\delta_{NM}$. 
This property is the {\it collective radical-quotient theorem}~[see App. \ref{APP}].
The radical reduction of the collective subspace produces a two-dimensional Hilbert space canonically isomorphic, after choosing the normalized quotient basis, to the state space of a qubit.

The associated
\begin{equation}
\mathcal{P} = \sum_{N=0}^{1} \ket{N_{\text{log}}}_R\, {}_L\!\bra{N_{\text{log}}},
\label{eq:proj}
\end{equation}
is a $\eta$-self-adjoint projector. 

The effective physical density matrix is obtained by projecting the global state onto
this observable basis, $\rho_{\text{eff}}=\mathcal{P}\rho_{AB}\mathcal{P}$, with matrix
elements $\rho_{NM}={}_L\!\bra{N_{\text{log}}}\rho_{AB}\ket{M_{\text{log}}}_R$.
Substituting Eq.~\eqref{eq:rhoABsectors} and using
${}_L\!\braket{N_{\text{log}}|\bar K}_R=\delta_{NK}\sqrt{Q(N)}$ yields directly
\begin{equation}
\rho_{NM} = \sqrt{Q(N)Q(M)}\; e^{i(N-M)\phi}.
\label{eq:element}
\end{equation}
We stress that no renormalization is performed by hand: because $Q(0)+Q(1)=1$, the
projection is automatically trace-preserving,
$\operatorname{Tr}(\mathcal{P}\rho_{AB}\mathcal{P})=1$.

Since the Vandermonde convolution makes each collective $N\ge2$ vector null in the restricted Krein form,
the observable state space is
truncated to $N,M\in\{0,1\}$ and, in the logical basis,
\begin{equation}
\rho_{\text{eff}} =
\begin{pmatrix} 1/2 & \tfrac12 e^{-i\phi} \\[4pt] \tfrac12 e^{i\phi} & 1/2 \end{pmatrix},
\end{equation}
a pure, fully coherent, positive-definite $2\times2$ density matrix with unit trace and
vanishing determinant. Hence we compose two half-qubits with proper normalization from a standard Hermitian qubit.

\paragraph*{Remark (phase locking).}
The sector decomposition above requires the two halves to share the phase $\phi$. If they
carry independent phases $\phi_A\ne\phi_B$, the $N=1$ component of
$\ket{R_A}\otimes\ket{R_B}$ becomes
$\sqrt{|p_0p_1|}\left(e^{i\phi_B}\ket{0,1}+e^{i\phi_A}\ket{1,0}\right)$,
which is no longer proportional to $\ket{\bar 1}_R$. A purely local relative dephasing
therefore removes the composite from the logical code space.

\section{Arbitrary pure qubits, the existence boundary, and the $L_1$ norm}
\label{sec:arbitrary}

\subsection{Generalized quasi-probabilities}

We generalize to an arbitrary pure qubit with ground- and excited-state weights $q$ and $p$,
$p+q=1$. The classical generating function is $g(z)=q+pz$, its fractional root is
$f(z)=(q+pz)^{1/2}$, and the generalized Sz\'{e}kely quasi-probabilities are
\begin{equation}
p_k = \sqrt{q}\,\binom{1/2}{k}\left(\frac{p}{q}\right)^{k}.
\label{eq:pk_gen}
\end{equation}

\paragraph*{Domain of existence.}
Equation~\eqref{eq:pk_gen} defines a signed measure of bounded total variation if and only
if $\sum_k \left|\binom{1/2}{k}\right| (p/q)^k$ converges, i.e.\ if and only if $p \le q$.
For $p>q$ the radius of convergence $q/p$ of the binomial series is smaller than unity,
the coefficients grow without bound, and no half-qubit exists. This restriction is not a
loss of generality---relabelling $\ket{0}\leftrightarrow\ket{1}$ maps $p\leftrightarrow q$
and brings any qubit into the admissible range.

\subsection{Vandermonde fusion for arbitrary bias}

Composing two such half-qubits, the weight of the $N$-th macroscopic manifold is
\begin{equation}
Q(N) = \sum_{j=0}^N p_j p_{N-j}
     = q\left(\frac{p}{q}\right)^{N}
       \sum_{j=0}^{N}\binom{1/2}{j}\binom{1/2}{N-j},
       \label{eq:QNdef}
\end{equation}
where the prefactors independent of $j$ have been extracted. Vandermonde's identity
$\sum_{j=0}^N \binom{\alpha}{j}\binom{\beta}{N-j}=\binom{\alpha+\beta}{N}$ evaluates the
sum to $\binom{1}{N}$, so
\begin{equation}
Q(N) = q\left(\frac{p}{q}\right)^{N}\binom{1}{N}.
\end{equation}
Hence $Q(0)=q$, $Q(1)=p$, and $Q(N)=0$ for every $N\ge2$ because $\binom{1}{N}$ vanishes
identically. The matrix elements in the logical basis of Eq.~\eqref{eq:log} again take the
form of Eq.~\eqref{eq:element}, giving
\begin{equation}
\rho_{\text{eff}} =
\begin{pmatrix} q & \sqrt{pq}\, e^{-i\phi} \\[4pt] \sqrt{pq}\, e^{i\phi} & p \end{pmatrix},
\end{equation}
as the density matrix of the pure state $\sqrt{q}\ket{0}+\sqrt{p}\,e^{i\phi}\ket{1}$. Two infinite-dimensional
fractional objects therefore recombine into an arbitrary pure qubit.

\subsection{The $L_1$ norm}

To quantify the non-Hermiticity of the fractionalization we evaluate the $L_1$ norm of
the quasi-probability distribution, which coincides with the ordinary Hilbert-space norm
of the right state,
\begin{equation}
\braket{R|R} = \lVert p\rVert_1 = \sum_{k=0}^\infty |p_k| .
\end{equation}
This is the quantity whose logarithm defines the quantum resource of a quasi-probability representation of stabilizer computation~\cite{Veitch2014,
Veitch2012, Mari2012}.

Using the sign pattern~\eqref{eq:signs}, $|\binom{1/2}{k}|=(-1)^{k-1}\binom{1/2}{k}$ for
$k\ge1$, and separating the $k=0$ term,
\begin{equation}
\lVert p\rVert_1
= \sqrt{q}\left(1 - \sum_{k=1}^\infty \binom{1/2}{k}\left(-\frac{p}{q}\right)^{k}\right).
\end{equation}
The remaining sum is the Taylor expansion of $\sqrt{1-p/q}$ minus its constant term, and
the exact closed form follows:
\begin{equation}
\lVert p\rVert_1 = 2\sqrt{q}-\sqrt{q-p},
\label{eq:L1}
\end{equation}
valid when $p \le q$.

Two features of Eq.~\eqref{eq:L1} deserve emphasis  as shown in Fig.~\ref{fig:L1}. 
First, for a deterministic vacuum ($p=0$, $q=1$) the norm is exactly $1$: no negative probabilities are
required, and the ``half-qubit'' degenerates to a single level system. Second, writing
$q=1-p$,
\begin{equation}
\frac{d\lVert p\rVert_1}{dp} = \frac{1}{\sqrt{1-2p}}-\frac{1}{\sqrt{1-p}} > 0
\qquad \text{for } 0<p<\tfrac12,
\end{equation}
so the norm increases monotonically and approaches its
\emph{supremum}
\begin{equation}
\sup_{p} \lVert p\rVert_1 = \lVert p \rVert_1\big|_{p=q=1/2} = \sqrt{2}
\end{equation}
exactly at the endpoint $p=1/2$. 

\begin{figure}[t]
\centering
\begin{tikzpicture}
\begin{axis}[
  width=\columnwidth, height=5.2cm,
  xlabel={$p$}, ylabel={$\lVert p\rVert_1$},
  xmin=0, xmax=0.5, ymin=0.95, ymax=1.5,
  xtick={0,0.1,0.2,0.3,0.4,0.5},
  tick label style={font=\footnotesize}, label style={font=\footnotesize},
  axis lines=left, clip=false,
  ytick={1.0,1.1,1.2,1.3,1.4142}, yticklabels={$1.0$,$1.1$,$1.2$,$1.3$,$\sqrt{2}$},
]
\addplot[blue!70!black, very thick, domain=0:0.49999, samples=300]
  {2*sqrt(1-x) - sqrt(1-2*x)};
\addplot[red, only marks, mark=*, mark size=2pt] coordinates {(0.5,1.41421)};
\addplot[gray, dashed, forget plot] coordinates {(0,1.41421) (0.5,1.41421)};
\end{axis}
\end{tikzpicture}
\caption{$L_1$ norm of the half-qubit, Eq.~\eqref{eq:L1}, plotted over the
domain of existence $0\le p\le q$. The norm rises monotonically and reaches $\sqrt{2}$
exactly at the unbiased point $p=q=1/2$ (red marker), which is simultaneously the boundary
beyond which the binomial series diverges and no half-qubit exists.}
\label{fig:L1}
\end{figure}

\section{Local observables: the half-Pauli algebra}
\label{sec:halfpauli}

Because a half-qubit occupies an infinite-dimensional Fock space, it does not carry a
closed $SU(2)$ algebra by construction. It is natural to ask which bounded
operators on the Krein space can serve as local observables. Recall that an operator $O$
is an observable in a Krein space when it is $\eta$-self-adjoint,
\begin{equation}
O^\sharp \equiv \eta\, O^\dagger \eta = O ,
\label{eq:etaselfadj}
\end{equation}
which guarantees that its biorthogonal expectation value
$\braket{O}=\bra{L}O\ket{R}=\bra{R}\eta O\ket{R}$ is real.

\subsection{A distinguished
\texorpdfstring{$\mathbb{Z}_2\times\mathbb{Z}_2$}{Z2 x Z2} grading subgroup}

We take the first grading to be number parity,
\begin{equation}
Z_A = (-1)^{\hat n_A} = \sum_{k=0}^\infty (-1)^k \ket{k}\bra{k},
\label{eq:z2}
\end{equation}
which is diagonal, commutes with $\eta$, satisfies $Z_A^\sharp=Z_A$, and obeys
$Z_A^2=\mathbb I$. Its biorthogonal expectation
value is the generating function evaluated at $z=-1$,
\begin{equation}
\langle Z_A\rangle_{\rm bi} = \sum_{k=0}^\infty (-1)^k p_k
= f(-1) = \sqrt{q-p},
\label{eq:ZA}
\end{equation}
which is real throughout the domain $p\le q$. Composing two halves gives
$Z_A\otimes Z_B=(-1)^{\hat n_A+\hat n_B}=(-1)^N$ and
\begin{equation}
\langle Z_A\otimes Z_B\rangle_{\rm bi} = \sum_N (-1)^N Q(N) = q-p,
\end{equation}
the usual $\sigma_z$ expectation value of the reconstructed qubit.

Comparing the parity of Eq.~\eqref{eq:z2} with the $\eta$-metric in the form~\eqref{eq:eta} gives
\begin{equation}
  \eta=2|0\rangle\langle0|-Z_A ,
  \label{eq:etaZA}
\end{equation}
so the two differ by a rank-one operator, i.e., the vacuum state. 
Each defines a $\mathbb{Z}_2$ grading, and the
gradings differ \emph{only} in the placement of $|0\rangle$,
\begin{align}
  Z_A:&\quad \{0,2,4,6,\dots\}\;\oplus\;\{1,3,5,\dots\},
  \label{eq:gradingZ}\\
  \eta:&\quad \{0,1,3,5,\dots\}\;\oplus\;\{2,4,6,\dots\}.
  \label{eq:gradingeta}
\end{align}

Since $Z_A$ and $\eta$ are diagonal, they commute and are both
$\sharp$-self-adjoint. Their product is
\begin{equation}
  V\equiv Z_A\,\eta=2|0\rangle\langle0|-I ,
\end{equation}
The four involutions therefore form the Klein group
$\mathfrak G=\{I,Z_A,\eta,V\}\cong\mathbb Z_2\times\mathbb Z_2$. 
Their
biorthogonal expectation values are
\begin{equation}
  \langle Z_A\rangle_{\rm bi}=\sqrt{q-p},\qquad
  \langle V\rangle_{\rm bi}=2\sqrt q-1,\qquad
  \langle \eta\rangle_{\rm bi}=\lVert\boldsymbol p\rVert_1 .
  \label{eq:threeZ2}
\end{equation}
Although $\eta$ has ordinary spectrum $\{\pm1\}$, its formal biorthogonal value is
$\langle\eta\rangle_{\rm bi}=\bra R\eta^2\ket R=\braket{R|R}
=\lVert\boldsymbol p\rVert_1$, which can exceed one. This is not a contradiction:
$\operatorname{Tr}(\rho_{\rm bi}\,\cdot)$ is not a positive state on the ordinary
Hilbert-space operator algebra. We use this value as an algebraic Krein-space
diagnostic, not as the Born expectation of a conventional dichotomic measurement.
Equation~\eqref{eq:etaZA} also gives
$\lVert\boldsymbol p\rVert_1=2p_0-\langle Z_A\rangle_{\rm bi}
=2\sqrt q-\sqrt{q-p}$ directly.

\subsection{No local partner of the parity exists}
\label{sec:nolocalX}

One would like a local $X_A$ with $X_A^2=\mathbb{I}$ and $\{X_A,Z_A\}=0$. The obvious
candidate is the operator that swaps adjacent levels,
\begin{equation}
X_A^{} = \sum_{k=0}^\infty
\Big( \ket{2k+1}\bra{2k} + \ket{2k}\bra{2k+1} \Big),
\label{eq:X0}
\end{equation}
which indeed satisfies both requirements. It is not, however, an observable: since
$s_{2k}s_{2k+1}=-1$ for every $k\ge1$ while $s_0s_1=+1$, one has
$\eta X_A^{}\eta \ne X_A^{}$, and the biorthogonal expectation value is complex,
\begin{equation}
\braket{X_A^{}} = 2\sqrt{p_0p_1}\,\cos\phi
 \;-\; 2i\sin\phi \sum_{k=1}^\infty \sqrt{|p_{2k}p_{2k+1}|}.
\label{eq:XA}
\end{equation}
For the unbiased half-qubit, $p_0p_1=1/4$ so the real part is exactly $\cos\phi$, while the
sum in the imaginary part converges---slowly, as $k^{-1/2}$ in the partial sums---to
\begin{equation}
2\sum_{k=1}^\infty \sqrt{|p_{2k}p_{2k+1}|} = 0.34489\ldots ,
\end{equation}
giving $\braket{X_A^{}}=\cos\phi - 0.34489\, i\sin\phi$.

The complex expectation value in Eq.~\eqref{eq:XA} shows that the
candidate operator in Eq.~\eqref{eq:X0} is not
$\eta$-self-adjoint and therefore cannot serve as a Krein-space
observable. This failure is not specific to that particular ansatz.
Within the present construction, no bounded $\eta$-self-adjoint
involution on the full Fock space of a single half-qubit can
anticommute with the number-parity operator $Z_A$. Thus, a global
half-Pauli $X_A$ satisfying the usual Pauli relations does not exist
under these assumptions, as established by the following proposition.

\begin{proposition}[Parity--metric obstruction]
\label{prop:nogo}
Let $\mathcal H$ be the Fock space of a single
half-qubit. Let $Z_A$ be the number-parity operator and let $\eta$-metric defined by the sign pattern in
Eq.~\eqref{eq:signs}. There exists no bounded operator
$X_A$ satisfying simultaneously
\begin{align}
\{X_A,Z_A\}&=0, \label{eq:nogo-anticommute}\\
X_A^2&=\mathbb I, \label{eq:nogo-involution}\\
X_A^\sharp\equiv\eta X_A^\dagger\eta&=X_A.
\label{eq:nogo-selfadjoint}
\end{align}
Consequently, $Z_A$ has no bounded $\eta$-self-adjoint Pauli-$X$
partner acting on the full single-half-qubit Fock space.
\end{proposition}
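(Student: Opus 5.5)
The plan is to reduce all three conditions to block-matrix identities with respect to the parity grading of Eq.~\eqref{eq:gradingZ}, and then to find a sign contradiction. Write $\mathcal H=\mathcal H_{E}\oplus\mathcal H_{O}$, where $\mathcal H_E=\overline{\mathrm{span}}\{\ket{0},\ket{2},\ket{4},\dots\}$ and $\mathcal H_O=\overline{\mathrm{span}}\{\ket{1},\ket{3},\dots\}$. In this decomposition
\begin{equation*}
Z_A=\begin{pmatrix}\mathbb I_E&0\\0&-\mathbb I_O\end{pmatrix},\qquad
\eta=\begin{pmatrix}\eta_E&0\\0&\mathbb I_O\end{pmatrix},\qquad
\eta_E=\ket{0}\bra{0}-\sum_{k\ge1}\ket{2k}\bra{2k}.
\end{equation*}
This uses Eq.~\eqref{eq:signs}: $\eta$ is $+1$ on every odd level, $+1$ on $\ket{0}$, and $-1$ on $\ket{2k}$ for $k\ge1$. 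Both operators are block diagonal because they are diagonal in the Fock basis.

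\emph{Step 1: anticommutation.} Writing a bounded $X_A=\begin{pmatrix}A&B\\C&D\end{pmatrix}$, condition~\eqref{eq:nogo-anticommute} gives $2A=0$ and $-2D=0$. Hence $X_A$ is purely off-diagonal. It consists of a bounded $B:\mathcal H_O\to\mathcal H_E$ and a bounded $C:\mathcal H_E\to\mathcal H_O$.

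\emph{Step 2: involution.} Squaring, $X_A^2=\mathrm{diag}(BC,\,CB)$. Condition~\eqref{eq:nogo-involution} therefore gives $BC=\mathbb I_E$ and $CB=\mathbb I_O$.

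\emph{Step 3: $\eta$-self-adjointness.} Using $X_A^\dagger=\begin{pmatrix}0&C^\dagger\\B^\dagger&0\end{pmatrix}$ and the block form of $\eta$, we get
\begin{equation*}
\eta X_A^\dagger\eta=\begin{pmatrix}0&\eta_E C^\dagger\\ B^\dagger\eta_E&0\end{pmatrix}.
\end{equation*}
Condition~\eqref{eq:nogo-selfadjoint} then gives $C=B^\dagger\eta_E$, equivalently $B=\eta_E C^\dagger$.

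\emph{Step 4: contradiction.} Substituting $C=B^\dagger\eta_E$ into $BC=\mathbb I_E$ gives $BB^\dagger\eta_E=\mathbb I_E$. Since $\eta_E^2=\mathbb I_E$, this is $BB^\dagger=\eta_E$. Evaluating on $\ket{2}\in\mathcal H_E$ yields
\begin{equation*}
0\le\lVert B^\dagger\ket{2}\rVert^2=\bra{2}BB^\dagger\ket{2}=\bra{2}\eta_E\ket{2}=-1,
\end{equation*}
which is impossible. Boundedness of $X_A$ is what makes $B^\dagger$ a genuine adjoint and the norm finite. Hence no such $X_A$ exists.

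The mechanism is that $\eta$ is positive on the whole odd sector but indefinite on the even sector. Consequently $BB^\dagger$, which is forced to be positive semidefinite, cannot equal $\eta_E$. The same computation explains why the candidate in Eq.~\eqref{eq:X0} fails: it would require the metric to look identical on both parity sectors, up to a unitary.

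The main point needing care is Step~3. One must correctly track the blocks of $\eta$, in particular that the odd block is exactly $\mathbb I_O$ with no sign flips, and that the Krein adjoint of the off-diagonal blocks is taken in the order shown. Beyond that, the argument uses only that some even level, for example $\ket{2}$, carries $s_k=-1$. It needs no information about the magnitudes $p_k$, so it holds for every admissible bias $p\le q$.
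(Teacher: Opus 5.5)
Your argument is correct and is essentially the paper's proof written in block-matrix form. The paper combines $X_A^\sharp=X_A$ and $X_A^2=\mathbb I$ into the $\eta$-isometry identity $X_A^\dagger\eta X_A=\eta$, evaluates it on $\ket{2}$, and uses that $X_A\ket{2}$ lies in the positive odd sector; since your relation $C=B^\dagger\eta_E$ gives $X_A\ket{2}=C\ket{2}=-B^\dagger\ket{2}$, your identity $\bra{2}BB^\dagger\ket{2}=-1$ is literally the paper's contradiction $\lVert X_A\ket{2}\rVert^2=-1$, and your version also shows that nonnegativity alone suffices, so the paper's appeal to $X_A\ket{2}\neq0$ is not needed.
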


 \begin{proof}
Assume that such an operator $X_A$ exists. Define the biorthogonal inner
product by
\[
[u,v]_\eta=\langle u|\eta v\rangle .
\]
Because $X_A^\sharp=X_A$ and $X_A^2=\mathbb I$, we have
\[
X_A^\dagger\eta X_A
 =\eta X_A^2
 =\eta .
\]
Hence $X_A$ preserves the indefinite inner product:
\[
[X_Au,X_Av]_\eta=[u,v]_\eta .
\]

Now consider the even-parity state $\ket{2}$. From the sign pattern of
$\eta$,
\[
Z_A\ket{2}=\ket{2},
\qquad
[\ket{2},\ket{2}]_\eta=-1.
\]
The anticommutation relation implies
\[
Z_A X_A\ket{2}=-X_A\ket{2},
\]
so $X_A\ket{2}$ belongs to the odd-parity subspace. The metric $\eta$
is positive on the entire odd-parity subspace. Moreover,
$X_A\ket{2}\neq0$ because $X_A^2=\mathbb I$. Therefore,
\[
[X_A\ket{2},X_A\ket{2}]_\eta
 =\lVert X_A\ket{2}\rVert^2>0.
\]
This contradicts preservation of the indefinite inner product, which
would require
\[
[X_A\ket{2},X_A\ket{2}]_\eta
 =[\ket{2},\ket{2}]_\eta=-1.
\]
Thus no such operator exists.
\end{proof}

The proof identifies the source of the obstruction as the mismatch
between the signatures of $\eta$ on the even- and odd-parity
subspaces. The metric is positive on every odd level, whereas the
even-parity subspace contains one positive direction, $\ket{0}$, and
infinitely many negative directions, $\ket{2},\ket{4},\ldots$.
An $\eta$-self-adjoint involution is an $\eta$-isometry, while
anticommutation with $Z_A$ would require it to map the even- and
odd-parity subspaces bijectively onto one another. Their inequivalent
metric signatures make such an isometric mapping impossible. If
$\eta$ were positive definite, this particular obstruction would
disappear; however, the negative coefficients could then no longer be
represented as signed norms in the present construction. We therefore
obtain the following structural conclusion:

\begin{quote}
\emph{The Pauli algebra is not realized globally on an isolated half-qubit
with $Z_A$ chosen as number parity; within the present construction,
it is recovered on the fused two-dimensional code space.}
\end{quote}

\subsection{Encoded Pauli operators on the fused qubit}

Although $X_A$ is not a local Krein observable, 
 it can be used as an ambient operator
whose compression acts on the selected fused qubit.
a macroscopic $X_L$ does, once the composite is
projected onto the physical subspace. Define
\begin{equation}
\Sigma_x = \tfrac{1}{\sqrt2}\left(X_A^{}+X_B^{}\right).
\end{equation}
The logical states of Eq.~\eqref{eq:log} are, for \emph{every} admissible bias,
\begin{equation}
\ket{0_{\text{log}}}_R = \ket{0,0}, \qquad
\ket{1_{\text{log}}}_R = \tfrac{1}{\sqrt2}\left(\ket{0,1}+\ket{1,0}\right).
\label{eq:logexplicit}
\end{equation}
This is not special to the unbiased point. From Eq.~\eqref{eq:pk_gen},
$p_0=\sqrt q$ and $p_1=p/(2\sqrt q)$, so $\ket{\bar0}_R=\sqrt q\,\ket{0,0}$ and
$\ket{\bar1}_R=\sqrt{p/2}\,(\ket{0,1}+\ket{1,0})$, while $Q(0)=q$ and $Q(1)=p$; the
normalization $\sqrt{Q(N)}$ in Eq.~\eqref{eq:log} therefore cancels the bias dependence
exactly, and the same holds for the left states because $s_0=s_1=+1$. Consequently
$\mathcal{P}$ is the ordinary orthogonal projector onto
$\operatorname{span}\{\ket{0,0},\tfrac{1}{\sqrt2}(\ket{0,1}+\ket{1,0})\}$, independent of
$(p,q)$, and the computation below establishes Eq.~\eqref{eq:Sigmax} for the whole family
of half-qubits at once.
A direct computation gives
\begin{align}
(X_A+X_B)\ket{0_{\text{log}}}_R &= \ket{1,0}+\ket{0,1}
   = \sqrt2\,\ket{1_{\text{log}}}_R, \\
(X_A+X_B)\ket{1_{\text{log}}}_R &= \sqrt2\,\ket{0,0} + \sqrt2\,\ket{1,1}.
\end{align}
The first line lies entirely in the physical subspace. In the second, the component
$\ket{1,1}$ belongs to the $N=2$ manifold and is removed by $\mathcal{P}$ of
Eq.~\eqref{eq:proj}, since ${}_L\!\braket{N_{\text{log}}|1,1}=0$ for $N=0,1$. Hence
$\mathcal{P}(X_A+X_B)\ket{1_{\text{log}}}_R=\sqrt2\ket{0_{\text{log}}}_R$,
and with the fusion normalization $1/\sqrt2$,
\begin{equation}
 X_{\rm L}\equiv \mathcal{P}\,\Sigma_x\,\mathcal{P} = \sigma_x ,
\label{eq:Sigmax}
\end{equation}

Likewise,
\begin{align}
 Z_{\rm L}&\equiv\left.\mathcal P(Z_A\otimes Z_B)\mathcal P
 \right|_{\operatorname{Ran}\mathcal P}=\sigma_z,\\
 Y_{\rm L}&\equiv-iZ_{\rm L}X_{\rm L}=\sigma_y.
\end{align}
Thus the selected positive code supports the usual Pauli algebra. In the fused qubit state,
\begin{equation}
 \langle X_{\rm L}\rangle=2\sqrt{pq}\cos\phi,\quad
 \langle Y_{\rm L}\rangle=2\sqrt{pq}\sin\phi,\quad
 \langle Z_{\rm L}\rangle=q-p.
\end{equation}
In particular, $\langle X_{\rm L}\rangle=\cos\phi$ only at $p=q=1/2$.

\section{MULTIPHOTON REALIZATION AT FINITE TRUNCATION}
\label{sec:realization}

In this section we discuss physical realization of the half-qubit, which relies on
 showing that the truncation is far less
damaging than it first appears. The fusion of two half-qubits into a
standard qubit is \emph{exact at every cutoff}, and only quantities
built from the $L_1$ norm inherit the slow $ k^{-3/2} $ tail.

\subsection{Physical setting}

We take as reference platform a pair of microwave cavity modes
dispersively coupled to transmon ancillas, although the discussion
applies verbatim to the motional modes of a trapped ion and, with the
Fock index replaced by a frequency-bin index, to synthetic-dimension
photonics.

Three primitives are required.

\emph{(i) Preparation.} The truncated right state
\begin{equation}
  |R_K\rangle=\mathcal{N}_K^{-1/2}\sum_{k=0}^{K}\sqrt{|p_k|}\,
  e^{ik\phi}|k\rangle,
  \qquad
  \mathcal{N}_K=\sum_{k=0}^{K}|p_k|,
  \label{eq:RK}
\end{equation}
is an arbitrary superposition of the lowest $K+1$ Fock states.
 Here the right state is normalized by itself $\langle R_K | R_K \rangle =1$.
 This state can be synthesised by alternating displacement and number-selective phase
(SNAP) layers, or by sideband pulses in the ion
case~\cite{Hofheinz2009, Heeres2015, Krastanov2015, Heeres2017, Leibfried2003}. Basis up to $K\simeq10$--$20$ are established technology.

\emph{(ii) The metric.} Although $\eta$ was defined in Eq.~\eqref{eq:eta}
as an infinite diagonal sum, the sign pattern collapses it to
\begin{equation}
  \eta=2|0\rangle\langle0|-(-1)^{\hat N}
  \label{eq:etaparity}
\end{equation}
for every $p$ in the domain of existence. The metric is therefore
\emph{the parity together with a single number-selective phase on the
vacuum}---not a generic SNAP gate. Controlled-parity is among the
best-developed primitives in circuit QED, where it underlies both Wigner tomography and
quantum non-demolition photon counting~\cite{Bertet2002, Vlastakis2013, Sun2014}. In the
optical domain the same statistics are accessible with photon-number-resolving
transition-edge sensors~\cite{Lita2008, Eisaman2011}. Note also that
$\eta$ is manifestly $K$-independent, so the truncated construction uses
the same metric as the exact one.

\emph{(iii) Joint readout.} 
We read out
 the particle numbers $n_A$ and $n_B$ with the signs $s_{n_A}$ and $s_{n_B}$
for a combined state $\ket{R_{A,K}R_{B,K}} $. The total particle number is $N=n_A+n_B$ of the microscopic pair $(k,N-k)$, with $n_A=k$ and $n_B=N-k$.
The corresponding sign-weighted expectation value is
\begin{align}
\bra{R_{A,K}R_{B,K}}\,\eta_A\otimes\eta_B\,\Pi_N\,\ket{R_{A,K}R_{B,K}}
= \frac{Q_K(N)}{\mathcal{N}_K^{2}} ,
\label{eq:jointreadout}
\end{align}
where $\Pi_N=\sum_{j+m=N}\ket{j,m}\bra{j,m}$ and the factor
$\mathcal{N}_K^{-2}$ arises because each mode of the product state carries the
normalization $\mathcal{N}_K^{-1/2}$ of Eq.~\eqref{eq:RK}.
Operationally the right-hand
side is a \emph{signed} average: one records the pair $(j,m)$ and accumulates the sign
$s_js_m$ for every shot with $j+m=N$, rather than a bare count.

In addition, we can also evaluate the individual \emph{signed} average, which is
\begin{align}
 \bra{R_{A,K}}\,\eta_A\,\sum_{j}\ket{j}\bra{j}\ket{R_{A,K}}
= \frac{F_K}{\mathcal{N}_K}.
\label{eq:indreadout}
\end{align}
This quantity fixes the normalization used to form
$Q(N)=\lim_{K\to\infty}Q_K(N)/F_K^2$.

\subsection{What truncation does and does not affect}

First, we write $F_K\equiv\sum_{k\le K}p_k$ for the signed partial sum and let $Q_K(N)=\sum_{j+m=N,\;j,m\le K}p_jp_m$.
We have $\sum_{N=0}^{2K}{Q_K(N)} = F_K^2$.

\begin{proposition}[Truncation exactness]
\label{prop:trunc}
For every $N\le K$ one has $Q_K(N)=Q(N)$ exactly.
\end{proposition}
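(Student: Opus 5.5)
The plan is to compare the index sets of the two sums directly. By definition,
\[
Q_K(N)=\sum_{\substack{j+m=N\\ j,m\le K}} p_j p_m ,
\qquad
Q(N)=\sum_{j=0}^{N} p_j p_{N-j},
\]
where the second expression is the convolution of Eq.~\eqref{eq:QNdef}. Both sums range over pairs $(j,m)$ of nonnegative integers with $j+m=N$. The only difference is that the truncated sum also requires $j\le K$ and $m\le K$. The whole argument therefore reduces to showing that this extra constraint is vacuous when $N\le K$.

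The key step is elementary. If $j,m\ge 0$ and $j+m=N\le K$, then $j\le N\le K$ and $m=N-j\le N\le K$. Hence every pair that appears in $Q(N)$ also appears in $Q_K(N)$. The converse inclusion is trivial, so the two index sets coincide and the two sums agree term by term, giving $Q_K(N)=Q(N)$. No convergence or tail estimate is needed: for fixed $N$ both sums are finite. The $k^{-3/2}$ tail of $|p_k|$ never enters.

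Combining this with the Vandermonde evaluation $Q(N)=q(p/q)^N\binom{1}{N}$ of Sec.~\ref{sec:arbitrary}, we get $Q_K(0)=q$ and $Q_K(1)=p$ for every $K\ge1$, and $Q_K(N)=0$ for $2\le N\le K$. This is the sense in which the fused qubit is exact at every cutoff. The sectors $N>K$ are not covered: there the constraint $j,m\le K$ removes the terms with $j>K$ or $m>K$, and $Q_K(N)$ generally differs from $Q(N)=0$. These sectors account for the discrepancy $F_K^2-1=\sum_{N>K}Q_K(N)$, which follows from $\sum_{N=0}^{2K}Q_K(N)=F_K^2$ together with $Q(0)+Q(1)=1$.

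There is no real obstacle here. The one point needing care is that the comparison is made with the unnormalized signed partial sums $Q_K$, not with the normalized readout $Q_K(N)/\mathcal N_K^2$ of Eq.~\eqref{eq:jointreadout}. The normalization factors affect the overall scale only, not the identity itself.
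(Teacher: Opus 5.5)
Your proof is correct and is essentially the paper's own argument: for $j,m\ge0$ with $j+m=N\le K$, the cutoff constraint $j,m\le K$ is automatically satisfied, so the truncated and untruncated convolutions coincide term by term. Your extra remark that $\sum_{N>K}Q_K(N)=F_K^2-1$ for $K\ge1$ is also correct, and it matches the residual weight of $17/512$ at $N=4,5,6$ in the $K=3$ example.
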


\begin{proof}
Any pair with $j+m=N\le K$ satisfies $j\le N\le K$ and $m\le N\le K$
automatically, so the constraint $j,m\le K$ is inactive and the sum
coincides term by term with the untruncated convolution of
Eq.~\eqref{eq:QNdef}. Truncation can affect only the manifolds
$K<N\le2K$.
\end{proof}

Two corollaries follow immediately and organise the rest of the section.

\begin{corollary}[The fused qubit is exact]
\label{cor:exactqubit}
The $N\le1$ component of $|R_K\rangle_A\otimes|R_K\rangle_B$ involves
only $p_0$ and $p_1$ and is therefore independent of $K$. Explicitly,
$Q_K(0)=p_0^2=q$ and $Q_K(1)=2p_0p_1=p$ for every $K\ge1$, so the state
conditioned on $N\le1$ is
\begin{equation}
  \sqrt{q}\,|0_{\rm log}\rangle+\sqrt{p}\,e^{i\phi}|1_{\rm log}\rangle,
  \label{eq:heraldedstate}
\end{equation}
\emph{exactly}, at any cutoff. Truncation affects only the normalization $\mathcal{N}_K$.
\end{corollary}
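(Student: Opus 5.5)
The plan is to read Corollary~\ref{cor:exactqubit} off Proposition~\ref{prop:trunc} and then identify the conditioned state explicitly. Because $K\ge1$, both $N=0$ and $N=1$ satisfy $N\le K$, so Proposition~\ref{prop:trunc} gives $Q_K(0)=Q(0)$ and $Q_K(1)=Q(1)$. The Vandermonde result of Sec.~\ref{sec:arbitrary} then yields $Q(0)=q$ and $Q(1)=p$. As a cross-check I will verify these directly from Eq.~\eqref{eq:pk_gen}: $p_0=\sqrt q$ and $p_1=p/(2\sqrt q)$, so $p_0^2=q$ and $2p_0p_1=p$. Only $p_0$ and $p_1$ enter, since the pairs with $j+m\le1$ are $(0,0)$, $(0,1)$ and $(1,0)$, and all of these lie inside the cutoff whenever $K\ge1$.

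Next I will write out the $N\le1$ component of $|R_K\rangle_A\otimes|R_K\rangle_B$ from Eq.~\eqref{eq:RK}:
\[
\mathcal N_K^{-1}\Big(\sqrt{|p_0|^2}\,|0,0\rangle+\sqrt{|p_0p_1|}\,e^{i\phi}\big(|0,1\rangle+|1,0\rangle\big)\Big).
\]
Since $p_0,p_1>0$, this equals $\mathcal N_K^{-1}\big(\sqrt q\,|0,0\rangle+\sqrt{p/2}\,e^{i\phi}(|0,1\rangle+|1,0\rangle)\big)$. Using the explicit logical states of Eq.~\eqref{eq:logexplicit}, it becomes $\mathcal N_K^{-1}\big(\sqrt q\,|0_{\rm log}\rangle+\sqrt p\,e^{i\phi}|1_{\rm log}\rangle\big)$. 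The only $K$-dependence is the scalar prefactor $\mathcal N_K^{-1}$.

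Conditioning on $N\le1$ means applying the projector $\Pi_0+\Pi_1$, whose range is spanned by $|0,0\rangle,|0,1\rangle,|1,0\rangle$, and renormalizing. Because $q+p=1$, the vector in parentheses already has unit norm, so renormalization simply removes $\mathcal N_K^{-1}$. This gives Eq.~\eqref{eq:heraldedstate} exactly, and the success probability is $\mathcal N_K^{-2}$.

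The only point needing care is the interpretation of ``conditioned on $N\le1$''. I need to check that the conditioned state lies entirely in the code space, with no component along the antisymmetric combination $|0,1\rangle-|1,0\rangle$. This holds because the two halves share the same phase, as required by the phase-locking remark. It is also worth noting that on $N\le1$ the signs are $s_0=s_1=+1$, so the ordinary and Krein normalizations coincide and the heralded state is a bona fide Hilbert-space qubit. Truncation therefore enters only through $\mathcal N_K$.
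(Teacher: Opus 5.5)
Your proposal is correct and follows essentially the paper's own argument: apply Proposition~\ref{prop:trunc} at $N=0,1\le K$ (cross-checked directly via $p_0=\sqrt q$, $p_1=p/(2\sqrt q)$) to get $Q_K(0)=q$ and $Q_K(1)=p$, then read off the normalized code-space state. Your explicit form of the $N\le1$ component, $\mathcal N_K^{-1}\bigl(\sqrt q\,|0_{\rm log}\rangle+\sqrt p\,e^{i\phi}|1_{\rm log}\rangle\bigr)$, with heralding probability $\mathcal N_K^{-2}$ (e.g.\ $512/729$ at $K=3$, consistent with Table~\ref{tab:K3}), usefully sharpens the paper's remark that truncation affects only $\mathcal N_K$.
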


\subsection{$K=3$ as an example at the unbiased point}
\label{sec:K3}
\begin{table*}[t]
\caption{Complete outcome statistics for two truncated half-qubits at
$K=3$, $p=q=1/2$. All weights are exact integers in units of $1/729$.
Within the manifolds $N=2$ and $N=3$ the positive and negative sign
products carry \emph{identical} weight, so the signed column vanishes
identically: this is the Vandermonde cancellation of
Eq.~\eqref{eq:QNdef}, visible directly in the sign-weighted photon-number statistics. The residual
entries at $N=4,5,6$ are truncation debris and lie, as guaranteed by
Proposition~\ref{prop:trunc}, entirely above the cutoff. The second-last column is the
sign-weighted photon-number joint-measurement outcome $Q_3(N)/\mathcal{N}_K^2$ of Eq.~\eqref{eq:jointreadout}
(the signed count divided by $\mathcal{N}_3^2=729/512$, i.e.\ by $729$ in these units);
the last column rescales it by $\mathcal{N}_K^2/F_K^2$ to give the normalized distribution
$Q_3(N)/F_K^2$, which converges to $Q(N)$ as $K\to\infty$.}
\label{tab:K3}
\begin{ruledtabular}
\begin{tabular}{crrrrrr}
$N$ & partitions & $s_js_m=+1$ & $s_js_m=-1$ & signed &
$Q_3(N)/\mathcal{N}^2_K$ & $Q_3(N)/F^2_K$ \\
\colrule
$0$ & $(0,0)$                     & $256$ & $0$  & $256$ & $256/729$ & $256/529$ \\
$1$ & $(0,1),(1,0)$               & $256$ & $0$  & $256$ & $256/729$ & $256/529$ \\
$2$ & $(1,1)\,;\,(0,2),(2,0)$     & $64$  & $64$ & $0$   & $0$ & $0$ \\
$3$ & $(0,3),(3,0)\,;\,(1,2),(2,1)$ & $32$ & $32$ & $0$  & $0$ & $0$ \\
\colrule
$4$ & $(1,3),(3,1),(2,2)$         & $20$  & $0$  & $20$  & $20/729$ & $20/529$ \\
$5$ & $(2,3),(3,2)$               & $0$   & $4$  & $-4$  & $-4/729$ & $-4/529$ \\
$6$ & $(3,3)$                     & $1$   & $0$  & $1$   & $1/729$ & $1/529$ \\
\colrule
    & total                       & $629$ & $100$& $529$ & $529/729$ & $1$ \\
\end{tabular}
\end{ruledtabular}
\end{table*}

Everything above is illustrated most transparently for the $K=3$ case. Take $p=q=1/2$ and truncate at $K=3$. The four
retained Sz\'ekely coefficients are
\begin{equation}
  p_0=\tfrac{1}{\sqrt2},\quad
  p_1=\tfrac{1}{2\sqrt2},\quad
  p_2=-\tfrac{1}{8\sqrt2},\quad
  p_3=\tfrac{1}{16\sqrt2},
\end{equation}
with signs $s=(+,+,-,+)$, so that
\begin{equation}
  \mathcal{N}_3=\sum_{k\le3}|p_k|=\frac{27}{16\sqrt2},
  \qquad
  F_3=\sum_{k\le3}p_k=\frac{23}{16\sqrt2}.
\end{equation}
The truncated right state~\eqref{eq:RK} is then remarkably simple: the
Born weights $|p_k|/\mathcal{N}_3$ are the rationals
$\tfrac{16}{27},\tfrac{8}{27},\tfrac{2}{27},\tfrac{1}{27}$, and
\begin{equation}
  |R_3\rangle=\frac{1}{3\sqrt3}
  \Bigl(4|0\rangle+2\sqrt2\,e^{i\phi}|1\rangle
  +\sqrt2\,e^{2i\phi}|2\rangle+e^{3i\phi}|3\rangle\Bigr).
  \label{eq:R3}
\end{equation}
This is a four-rung state with a single photon-number distribution
$(16,8,2,1)/27$---well within reach of existing cavity or motional-state
 tomography.

Preparing two such modes and recording the pair $(j,m)$ gives sixteen
outcomes with joint weights $w_jw_m$, which in units of $1/729$ are the
products of $(16,8,2,1)$. Grouping them by the total excitation
$N=j+m$ and by the sign product $s_js_m$ yields
Table~\ref{tab:K3}.

Three features of Table~\ref{tab:K3} deserve emphasis.

\emph{(i) The theoretical sign-weighted cancellation is exact.} At
$N=2$ the partition $(1,1)$ carries weight $64$ with $s_1s_1=+1$, while
$(0,2)$ and $(2,0)$ carry weight $32$ each with $s_0s_2=-1$; the two
balance exactly. The same happens at $N=3$, where
$(0,3),(3,0)$ balance $(1,2),(2,1)$ at $32$ apiece. 
 In a finite data set, these reconstructed positive and negative weights fluctuate
statistically; exact equality is a statement about the theoretical distribution, not
about every experimental sample.

\emph{(ii) The logical populations are exact.} Reading off
$Q_3(1)/ Q_3(0)=1=p/q$ recovers the unbiased qubit
with no truncation error whatsoever, in accordance with
Corollary~\ref{cor:exactqubit}. Only the common scale is affected:
$ Q_3(0)/F_K^2= Q_3(1)/F_K^2=256/529=0.4839$ rather than $1/2$,
a $3.2\%$ deficit equal to $F_3^{-2}-1=-17/529$. Note that $F_3>1$ here:
the signed partial sums alternate about their limit, so this residual
changes sign with $K$.

\emph{(iii) The nonvanishing terms are above the cutoff.} The manifolds $N=4,5,6$
carry total signed weight $17/529$ and are pure truncation artefact;
they would be absent for the untruncated state.

We can consider larger $K$ to see convergence toward the standard qubit.
For $K=5,10,20,50$, we can evaluate $Q_K(N=0,1)/F_K^2=0.492, 0.503,0.501,0.500$, which can recover the standard qubit with larger finite truncation.

\section{Generalization to $n$th roots: the $1/n$th qubit}
\label{sec:nthroot}

Nothing in Sec.~\ref{sec:halfqubit} requires the root to be square. Taking
$f(z)=(q+pz)^{1/n}$ gives generalised quasi-probabilities
$p_k^{(n)}=q^{1/n}\binom{1/n}{k}(p/q)^k$ whose signs, for $0<1/n<1$,
follow the \emph{same} alternating pattern~\eqref{eq:signs}; the
metric~\eqref{eq:etaparity} is therefore independent of $n$. The
$n$-fold Vandermonde identity
$\sum\binom{1/n}{k_1}\cdots\binom{1/n}{k_n}=\binom{1}{N}$ annihilates
every manifold $N\ge2$ exactly as before, so $n$ such objects fuse into
one qubit: a \emph{$1/n$th qubit}. The $L_1$ norm generalizes to
\begin{equation}
  \|p\|_1^{(n)}=2q^{1/n}-(q-p)^{1/n},
  \qquad
  \sup_p\|p\|_1^{(n)}=2^{(n-1)/n},
  \label{eq:nthroot}
\end{equation}
reducing to Eq.~\eqref{eq:L1} at $n=2$, with
$\langle Z_A\rangle_{bi}=(q-p)^{1/n}$ the local $n$th root of the macroscopic
expectation value. The supremum $2^{(n-1)/n}$ increases monotonically $n$ and
approaches $2$, the full Hilbert-space dimension of the qubit, as $n\to\infty$: the finer
the fractionalization, the closer its $L_1$ norm comes to saturating the dimension of
the object being fractionalized. The tail becomes heavier,
$|p_k^{(n)}|\sim k^{-1-1/n}$, so the series remains absolutely
convergent for every finite $n$ but the truncation cost grows.

\section{Limited comparison with Majorana fractionalization}
\label{sec:majorana}

Majorana fermions provide a useful contrast rather than a microscopic interpretation
of the present construction. In an Ising-type topological phase, defects binding Majorana
fermions realize the non-Abelian charge $\sigma$, whose quantum dimension is
$d_\sigma=\sqrt2$; the fermionic charge $\psi$ has $d_\psi=1$, and an individual
Majorana operator is not itself an anyon carrying quantum dimension
$\sqrt2$~\cite{Read2000,Nayak2008,Alicea2012,Sarma2015}. Two  define
one nonlocal complex-fermion mode, whereas a parity-preserving logical qubit is
conventionally encoded using four Majorana fermions at fixed total parity. By contrast, the
half-qubit here is a state-specific signed-convolution factor in a Krein representation.
Its $L_1$ norm is representation dependent and is neither a fusion-category quantum
dimension nor evidence for braiding, topological degeneracy, or topological protection.
Consequently,
$\sup_p\lVert\boldsymbol p\rVert_1=\sqrt2=d_\sigma$ is only a numerical analogy.
Likewise, the state-specific $n$th-root factor (the $1/n$-qubit) has no established
correspondence with a $\mathbb{Z}_n$ parafermion.

\section{Conclusion}
\label{sec:conclusion}

We have constructed a local, algebraic fractionalization of a qubit by embedding
Sz\'{e}kely's half-coin into a Krein space. Two biorthogonal half-qubits fuse into an
arbitrary pure qubit such that the collective
$N\ge2$ vectors are null in Krein space under the signed Vandermonde
convolution. As a consequence, the $L_1$ norm
$\lVert p\rVert_1 = 2\sqrt q-\sqrt{q-p}$ rises monotonically with the bias and
attains $\sqrt2$ exactly at the point $p=q=1/2$.
This algebraic fractionalization differs from Majorana fractionalization.
A  single half-qubit has a well-defined state space with negative probability interpretation, which can be incorporated in the property of non-Hermiticity.

Furthermore, we find that the local algebra is a
$\mathbb{Z}_2\times\mathbb{Z}_2$ group $\{I,Z_A,\eta,V\}$ generated by the parity
and the $\eta$-metric.  No $\eta$-self-adjoint operators can anticommute with
number parity. On the selected positive fused qubit, however, produces
encoded $X_{\rm L},Y_{\rm L},Z_{\rm L}$ satisfying the Pauli algebra.

The scheme is not merely formal. Because the $N\le1$ manifolds involve only $p_0$ and
$p_1$, the fused qubit is exact at every Fock cutoff, and only quantities built from the
$L_1$ norm inherit the slow $k^{-3/2}$ tail. For any truncation of the Fock states $|K\rangle$ with $N \le K$ and $N$ being the total number , the Vandermonde cancellation is  exactly as established by {\bf Proposition 2.}
The required state is a 
superposition with finite photon-number weights, which can be reached within the existing cavity
and trapped-ion synthesis. The readout uses variants of parity and signed photon number measurements are already standard in circuit QED, providing several experimental accessible  platform for realization of a half-qubit.

Several questions remain open. 
First, is it possible to create an entangled two-half-qubit? If we can realize this entangled pair, we might use it as a quantum resource and can apply the quantum teleportation scheme.
 A second question is whether the physical subspace can be characterized algebraically rather than
postulated---that is, whether there exists a class of operators leaving the $N\le1$ sector
invariant, which would place the present truncation on the same footing as the
Gupta--Bleuler quantization of the electromagnetic field~\cite{Gupta1950, Bleuler1950},
where the null states form a subspace preserved by every gauge-invariant observable.
A third question is about the dynamics of a half-qubit. The construction here is entirely kinematical, 
and we make no claim of an operational advantage. Whether there exists a Hamiltonian acting on half-qubits whose
fused evolution reproduces a prescribed qubit channel more economically than the qubit
description itself would determine whether fractional quasi-probability degrees of freedom
are a useful computational resource or only a structural curiosity. Finally, the $n$th-root
generalization of Sec.~\ref{sec:nthroot} raises the question of whether the sequence of
$1/n$-qubits, with the $L_1$ norm $2^{(n-1)/n}$ approaching the Hilbert-space dimension
$2$, admits a continuum limit.

\begin{acknowledgments}
The author used generative-AI tools (ChatGPT Sol 5.6, Claude Opus 5, and Gemini 3.1 Pro)
for language assistance and algebraic cross-checks. The author reviewed and verified the
derivations, numerical values, citations, and final manuscript text and assumes full
responsibility for the work.
The  author thank RIKEN iTHEMS and the NCTS, where this work has been initiated from during the workshop on ''The 2nd iTHEMS-NCTS Joint Workshop".
This work is supported by National Science and Technology Council of Taiwan under Grants No. NSTC 114-2918-I-007-015, 115-2112-M-007-039. 
The author thanks the support from the National Center for Theoretical Sciences, Physics Division. 
\end{acknowledgments}

\appendix
\section{Collective Radical-Quotient Theorem}
\label{APP}

Let
\[
\eta_{AB}\equiv\eta_A\otimes\eta_B
\]
and define the restricted Krein form by
\begin{equation}
 [u,w]_{\eta_{AB}}
 \equiv \bra{u}\eta_{AB}\ket{w}.
 \label{eq:restricted-krein-form}
\end{equation}
The collective subspace and its higher-sector subspace are
\begin{align}
 \mathcal S
 &\equiv
 \overline{\operatorname{span}}
 \left\{\ket{\bar N}_R:N\geq0\right\},\\
 \mathcal K
 &\equiv
 \overline{\operatorname{span}}
 \left\{\ket{\bar N}_R:N\geq2\right\},
 \label{eq:collective-subspaces}
\end{align}
where both closures are taken in the auxiliary Hilbert-space norm.
The radical of the restricted form is
\begin{align}
 \operatorname{rad}\mathcal S
 &\equiv
 \left\{
 \ket{u}\in\mathcal S:
 [u,w]_{\eta_{AB}}=0
 \text{ for every }\ket{w}\in\mathcal S
 \right\}  \notag\\
& =
 \mathcal S\cap\mathcal S^{[\perp]} .
 \label{eq:radical-definition}
\end{align}

\begin{theorem}[Collective radical-quotient theorem]
\label{thm:radical-quotient}
Let $p+q=1$ and $0<p\leq q$. Then
\begin{equation}
 \mathcal K=\operatorname{rad}\mathcal S.
 \label{eq:radical-equality}
\end{equation}
Consequently, the Krein form induces a nondegenerate,
positive-definite form on $\mathcal S/\mathcal K$. The quotient is
two dimensional, and its Gram matrix in the classes
$[\ket{\bar0}_R]$ and $[\ket{\bar1}_R]$ is
\begin{equation}
 G_{\mathcal S/\mathcal K}
 =
 \begin{pmatrix}
 q&0\\
 0&p
 \end{pmatrix}.
 \label{eq:quotient-gram}
\end{equation}
\end{theorem}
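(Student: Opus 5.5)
The plan is to reduce everything to the diagonal structure of the restricted Krein form on the collective vectors, and then handle the closures using orthogonality and boundedness.

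\emph{Step 1: the collective vectors.} For different $N$, the vectors $\ket{\bar N}_R$ are supported on the disjoint total-number sectors $\{\ket{k,N-k}\}$. Hence they are mutually orthogonal in the auxiliary Hilbert inner product. Since $0<p\le q$, every $p_k=\sqrt q\binom{1/2}{k}(p/q)^k$ is nonzero, so each $\ket{\bar N}_R$ is a nonzero vector. Let $e_N\equiv\ket{\bar N}_R/\lVert\ket{\bar N}_R\rVert$ denote the normalized vectors. Then $\{e_N\}_{N\ge0}$ is an orthonormal basis of $\mathcal S$, and $\{e_N\}_{N\ge2}$ is an orthonormal basis of $\mathcal K$. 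So $\mathcal S$ consists of the $\ell^2$ expansions $u=\sum_N c_N e_N$, and $\mathcal K$ is exactly the set of such $u$ with $c_0=c_1=0$. Thus $\mathcal S=\operatorname{span}\{e_0,e_1\}\oplus\mathcal K$.

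\emph{Step 2: the Gram matrix.} Because $\eta_{AB}=\eta_A\otimes\eta_B$ is diagonal in the Fock basis, it preserves each total-number sector. A direct computation, which is Eq.~\eqref{eq:overlap}, gives
\begin{equation*}
[\bar N,\bar M]_{\eta_{AB}}=\delta_{NM}\sum_k s_ks_{N-k}|p_k||p_{N-k}|=\delta_{NM}Q(N).
\end{equation*}
The Vandermonde evaluation of Sec.~\ref{sec:arbitrary} then gives $Q(0)=q$, $Q(1)=p$, and $Q(N)=0$ for $N\ge2$.

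\emph{Step 3: passing to closures.} The operator $\eta_{AB}$ is bounded (indeed unitary), so $[\cdot,\cdot]_{\eta_{AB}}$ is jointly continuous in the Hilbert norm. Continuity lets us extend Step 2 from finite spans to the closures. For $u=\sum c_Ne_N$ and $w=\sum d_Ne_N$ in $\mathcal S$ we obtain
\begin{equation*}
[u,w]_{\eta_{AB}}=\frac{\bar c_0d_0\,q}{\lVert\bar0\rVert^2}+\frac{\bar c_1d_1\,p}{\lVert\bar1\rVert^2}.
\end{equation*}
This closure argument is the main point needing care. It is handled by taking limits of partial sums, using the boundedness of $\eta_{AB}$.

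\emph{Step 4: identifying the radical.}
\begin{itemize}
\item If $u\in\mathcal K$, then $c_0=c_1=0$, and the formula above vanishes for every $w\in\mathcal S$. Hence $\mathcal K\subseteq\operatorname{rad}\mathcal S$.
\item Conversely, let $u\in\operatorname{rad}\mathcal S$. Testing against $w=e_0$ gives $\bar c_0q=0$, and testing against $w=e_1$ gives $\bar c_1p=0$. Since $q>0$ and $p>0$, this forces $c_0=c_1=0$, so $u\in\mathcal K$. (This is where the hypothesis $p>0$ is essential.)
\end{itemize}
Together these give $\mathcal K=\operatorname{rad}\mathcal S$.

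\emph{Step 5: the quotient.} Because $\mathcal K$ is the radical, the induced form $[[u],[w]]\equiv[u,w]_{\eta_{AB}}$ on $\mathcal S/\mathcal K$ is well defined. By Step 1, $\mathcal S/\mathcal K\cong\operatorname{span}\{e_0,e_1\}$, so the quotient is two-dimensional, with basis the classes $[\ket{\bar0}_R]$ and $[\ket{\bar1}_R]$. In these classes the Gram matrix is $\operatorname{diag}(q,p)$ by Step 2. Since $q,p>0$ this matrix is positive definite, and in particular the induced form is nondegenerate. This is Eq.~\eqref{eq:quotient-gram}.
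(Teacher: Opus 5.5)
Your proposal is correct and follows essentially the same route as the paper. It uses the sector orthogonality of the $\ket{\bar N}_R$, the continuity of the bounded form $[\cdot,\cdot]_{\eta_{AB}}$ to extend $[\bar N,\bar M]_{\eta_{AB}}=\delta_{NM}Q(N)$ to the closure, and the Vandermonde values $Q(0)=q$, $Q(1)=p$, $Q(N)=0$ for $N\ge2$, from which you read off the radical and the quotient Gram matrix. The only difference is cosmetic: you expand in the normalized basis $e_N$, while the paper uses the unnormalized $\ket{\bar N}_R$ with weights $r_N$. Also, testing against $e_0$ and $e_1$ strictly gives $\bar c_0 q/\lVert\bar0\rVert^2=0$ and $\bar c_1 p/\lVert\bar1\rVert^2=0$, which differ from what you wrote only by harmless positive factors.
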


\begin{proof}
The collective vectors $\ket{\bar N}_R$ belonging to different
total-number sectors are orthogonal in the auxiliary Hilbert inner
product. For $p>0$, every $\ket{\bar N}_R$ is nonzero, with
\begin{equation}
 r_N\equiv
 \left\|\ket{\bar N}_R\right\|_{\mathcal H}^{2}
 =
 \sum_{k=0}^{N}|p_k|\,|p_{N-k}|>0.
 \label{eq:collective-hilbert-norm}
\end{equation}
Consequently, every vector $\ket{u}\in\mathcal S$ has a unique
Hilbert-convergent expansion
\begin{equation}
 \ket{u}=\sum_{N=0}^{\infty}a_N\ket{\bar N}_R,
 \qquad
 \sum_{N=0}^{\infty}|a_N|^2r_N<\infty,
 \label{eq:u-expansion}
\end{equation}
and similarly
\begin{equation}
 \ket{w}=\sum_{N=0}^{\infty}b_N\ket{\bar N}_R.
\end{equation}

Because $\eta_{AB}$ is bounded, the Krein form is continuous in the
auxiliary Hilbert topology. Therefore, the sector-overlap identity
\begin{equation}
 [\bar N,\bar M]_{\eta_{AB}}
 =
 {}_L\!\braket{\bar N|\bar M}_R
 =
 \delta_{NM}Q(N)
 \label{eq:appendix-sector-overlap}
\end{equation}
extends from finite linear combinations to all vectors in
$\mathcal S$. Vandermonde's identity gives
\begin{equation}
 Q(N)
 =
 q\left(\frac{p}{q}\right)^N\binom{1}{N},
\end{equation}
and hence
\begin{equation}
 Q(0)=q,\qquad Q(1)=p,\qquad
 Q(N)=0\quad(N\geq2).
 \label{eq:appendix-Q-values}
\end{equation}
It follows that
\begin{equation}
 [u,w]_{\eta_{AB}}
 =
 q\,a_0^*b_0+p\,a_1^*b_1.
 \label{eq:restricted-form-explicit}
\end{equation}

If $\ket{u}\in\mathcal K$, then $a_0=a_1=0$, and
Eq.~\eqref{eq:restricted-form-explicit} gives
\[
 [u,w]_{\eta_{AB}}=0
\]
for every $\ket{w}\in\mathcal S$. Thus
\[
 \mathcal K\subseteq\operatorname{rad}\mathcal S.
\]

Conversely, suppose that
$\ket{u}\in\operatorname{rad}\mathcal S$. Choosing successively
$\ket{w}=\ket{\bar0}_R$ and
$\ket{w}=\ket{\bar1}_R$ gives
\begin{equation}
 q\,a_0^*=0,\qquad p\,a_1^*=0.
\end{equation}
Since $p,q>0$, this implies $a_0=a_1=0$, and therefore
$\ket{u}\in\mathcal K$. Hence
\[
 \operatorname{rad}\mathcal S\subseteq\mathcal K,
\]
which proves Eq.~\eqref{eq:radical-equality}.

Because $\mathcal K$ is the radical, the quotient form
\begin{equation}
 \bigl[[u],[w]\bigr]_{\mathrm{quot}}
 \equiv [u,w]_{\eta_{AB}}
 \label{eq:quotient-form}
\end{equation}
is independent of the chosen representatives. Every quotient class
has a representative
\[
 a_0\ket{\bar0}_R+a_1\ket{\bar1}_R,
\]
and Eq.~\eqref{eq:restricted-form-explicit} gives
\begin{equation}
 \bigl[[u],[u]\bigr]_{\mathrm{quot}}
 =
 q|a_0|^2+p|a_1|^2.
\end{equation}
This expression is strictly positive for every nonzero quotient
class because $p,q>0$. Thus the induced form is positive definite.
The quotient is spanned by the two independent classes
$[\ket{\bar0}_R]$ and $[\ket{\bar1}_R]$, and its Gram matrix is
Eq.~\eqref{eq:quotient-gram}.
\end{proof}

\paragraph*{Degenerate endpoint $p=0$.}
At $p=0$ one has $q=1$, $p_0=1$, and $p_k=0$ for every $k\geq1$.
Consequently,
\[
 \ket{\bar0}_R=\ket{0,0},
 \qquad
 \ket{\bar N}_R=0\quad(N\geq1).
\]
Therefore,
\[
 \mathcal S=\operatorname{span}\{\ket{0,0}\},
 \qquad
 \mathcal K=\operatorname{rad}\mathcal S=\{0\},
\]
and $\mathcal S/\mathcal K$ is the one-dimensional vacuum space.
In particular, no normalized $N=1$ quotient representative exists
at this endpoint.

\bibliography{half_qubit.bib}
\end{document}